\renewcommand{\P}{{\mathbb{P}}}
\newcommand{\R}{{\mathbb{R}}}
\newcommand{\x}{{\mathbf{x}}}
\newcommand{\bl}{{\boldsymbol\ell}}
\newcommand{\bdelta}{{\boldsymbol\delta}}
\newcommand{\bu}{{\mathbf{u}}}
\newcommand{\X}{{\mathbf{X}}}
\newcommand{\var}{{\text{var}}}
\newcommand{\1}{\mathbbm{1}}
\newcommand{\z}{{\mathbf{z}}}
\newcommand{\Z}{{\mathbf{Z}}}
\newcommand{\cX}{{\mathcal{X}}}
\newcommand{\cF}{{\mathcal{F}}}
\newcommand{\cD}{{\mathcal{D}}}
\newcommand{\deltahat}{{\hat{\delta}}}
\newtheorem{theorem}{Theorem}
\begin{document}
%\maketitle
\begin{center}
\Large{\textbf{Discrete Optimization for Interpretable Study Populations and Randomization Inference in an Observational Study of Severe Sepsis Mortality}}\\
%\vspace{.3 in}
%\large{\textbf{Colin B. Fogarty$^{1*}$, Mark E. Mikkelsen$^{2,3}$, David F. Gaieski$^4$,\\ and Dylan S. Small$^{1}$}}\\
%\vspace{.3 in}
%\normalsize{$^{1}$Department of Statistics, The Wharton School, University of Pennsylvania,
%Philadelphia, PA 19104\\
%$^2$ Pulmonary, Allergy, and Critical Care Division, Perelman School of Medicine, University of Pennsylvania,
%Philadelphia, PA 19104\\
%$^3$Center for Clinical Epidemiology and Biostatistics, Perelman School of Medicine, University of Pennsylvania,  Philadelphia, PA 19104 \\
%$^4$ Department of Emergency Medicine, Thomas Jefferson University, Philadelphia PA 19107}\\
%\vspace{.1 in}
%\normalsize{*email: \texttt{cfogarty@wharton.upenn.edu}}
\end{center}
\abstract{Motivated by an observational study of the effect of hospital ward versus intensive care unit admission on severe sepsis mortality, we develop methods to address two common problems in observational studies: (1) when there is a lack of covariate overlap between the treated and control groups, how to define an interpretable study population wherein inference can be conducted without extrapolating with respect to important variables; and (2) how to use randomization inference to form confidence intervals for the average treatment effect with binary outcomes. Our solution to problem (1) incorporates existing suggestions in the literature while yielding a study population that is easily understood in terms of the covariates themselves, and can be solved using an efficient branch-and-bound algorithm. We address problem (2) by solving a linear integer program to utilize the worst case variance of the average treatment effect among values for unobserved potential outcomes that are compatible with the null hypothesis.  Our analysis finds no evidence for a difference between the sixty day mortality rates if all individuals were admitted to the ICU and if all patients were admitted to the hospital ward among less severely ill patients and among patients with cryptic septic shock. We implement our methodology in \texttt{R}, providing scripts in the supplementary material.}

\vspace{.15 in}
\noindent\textit{Keywords:} Causal Inference; Full Matching; Common Support; Average Treatment Effect; Causal Risk Difference; Integer Programming
\newpage
\doublespace
\section{Introduction}\label{sec:intro}
\subsection{Severe Sepsis Incidence and Mortality}
Severe sepsis is a leading cause of morbidity and mortality worldwide. It is defined as a systematic inflammatory response to infection that is accompanied by acute organ dysfunction. \citet{ang01} estimate that severe sepsis afflicts roughly 750,000 individuals in the United States per year, of whom an estimated 215,000 perish. \citet{gai13} note that cases of severe sepsis appear to be on the rise. In a recent study, \citet{liu14} found that sepsis contributed to one in every two to three deaths in two complementary hospital cohorts, and suggest that ``improved treatment of sepsis (potentially a final hospital pathway for multiple other underlying conditions) could offer meaningful improvements in population mortality.''

A critical decision along this pathway is whether to admit a patient to an intensive care unit (ICU), or rather to an appropriate hospital ward. It is estimated that approximately 50 percent of severe sepsis patients in the United States are admitted to an ICU after presentation to an emergency department, with the rest being admitted to a hospital ward \citep{ang13}. Recent evidence suggests that admission to a non-ICU setting may be increasing \citep{whi14}. Severe sepsis varies in degree of gravity at time of presentation to the emergency department. In general, sicker patients tend to be placed in the ICU, and those exhibiting less severe symptoms are often admitted to the hospital ward. Furthermore, \citet{bru96} and \citet{roh13} note that there are systematic ways in which the epidemiology, site of infection, and organ dysfunctions appear to vary between ICU and hospital ward patients. 

The existing literature offers contrasting opinions on the optimal process of care for severe sepsis patients. \citet{est07} argue that there is a large population of patients not admitted to the ICU who could ``potentially benefit from more aggressive resuscitation and innovative therapies'' that are available in the ICU. They found that severe sepsis patients in hospital wards had a higher estimated mortality rate than those who were admitted to the ICU, although their result was not statistically significant. On the other hand, \citet{lev08} found that admission to an ICU covered by intensivists may result in worse health outcomes, in part because patients may receive unnecessary (but potentially harmful) therapies or procedures. It is feasible, then, that certain severe sepsis patients may be better off if they were admitted to the hospital ward, as they would not be subjected to interventions in the ICU that are not warranted given their condition. In keeping with this hypothesis, \citet{sun05} found that severe sepsis mortality rates among non-ICU patients were lower than those among ICU patients.

The goal of our analysis is to assess the causal effect of ICU admission versus hospital ward admission on health outcomes. To be precise, we aim to compare the average health outcomes if all individuals were admitted to the ICU with the average outcomes if all patients were admitted to the hospital ward.   We use data from a retrospective observational cohort study wherein hospital admissions of individuals with severe sepsis to the Hospital of the University of Pennsylvania between January 2005 and December 2009 were examined; see \citet{whi14} for further details on the data set. We only consider patients without hemodynamic septic shock (a patient has hemodynamic septic shock if the patient has severe sepsis coupled with hypotension after initial fluid resuscitation) because patients with hemodynamic septic shock are almost exclusively admitted to the ICU  (ProCESS Trial, 2014).  Investigators identified 1507 remaining individuals with severe sepsis but not hemodynamic septic shock, of whom 695 were admitted to an ICU and 812 were admitted to a hospital ward. Thirty covariates detailing demographic information, comorbidities, emergency department process of care, and site of infection were identified by expert consultation as germane to the hospital pathway and to health outcomes. We separated our covariates into three tiers of importance based on an \textit{a priori} assessment (i.e. before examining the data set) of their effect on admission decisions and mortality. Our health outcome is a binary variable that takes on the value 1 if a patient died any time between the date of admission and 60 days after hospital admission. The tier 1 covariates are listed in Table \ref{tab:covariates} along with their means and standard deviations among ICU and hospital ward patients, while remaining covariates are summarized in Appendix A.

\begin{table}
\begin{center}
\caption{Covariate Means and Standard Deviations, Original Population and Study Population for Tier 1 Covariates. The first two columns are the covariate means (standard deviations) in the initial study population, and the last two columns are the covariate means (standard deviations) in the study population defined in Section \ref{sec:subpop}. }
\vspace{.1 in}
\begin{tabular}{l c c c c}
\multicolumn{1}{l}{}&\multicolumn{2}{c}{Original Population}&\multicolumn{2}{c}{Study Population} \\
\hline
%\multicolumn{1}{|l}{Covariate}	& \multicolumn{1}{c}{Tier}	&\multicolumn{1}{c}{ICU Mean}	&\multicolumn{1}{c}{Ward Mean}&\multicolumn{1}{c}{ICU Mean}&\multicolumn{1}{c|}{Ward Mean}	\\
Covariate&ICU&Ward&ICU&Ward\\
\hline
\hline
Age 	&		60.1	&	55.1	&	60.56	&	55.88	\\
	&			(17.4) &	(18.4)	&	(17.1)	&	(18.3)	\\
Charlson comorbity index 	&		2.52	&	2.41	&	2.43	&	2.48	\\
	&			(2.81)	&	(2.64)	&	(2.70)	&	(2.65)\\
Initial serum lactate 	&		4.26	&	2.56	&	3.22	&	2.61	\\
	&			(2.98)	&	(1.23)	&	(1.24)	&	(0.956)	\\
APACHE II score 	&		17.7	&	13.6	&	16.9	&	13.8	\\
	&			(6.37)	&	(5.27)	&	(5.46)	&	(4.73)	\\
\hline
\end{tabular}
\label{tab:covariates}
\end{center}
\end{table}

%Cryptic septic shock	&	 Exact 	&	44\%	&	10\%	&	31\%&	9\%	\\
A subgroup of severe sepsis patients who are of particular interest to the critical care community are those with \textit{cryptic septic shock}. These are severe sepsis patients who have normal levels of systolic blood pressure (so do not have hemodynamic septic shock) yet exhibit high levels of initial serum lactate ($\geq 4$ mmol/L) \citep{pus11}. Initial serum lactate levels refer to the amount of lactic acid in the blood upon presentation to an emergency department. Initial serum lactate levels have been associated with mortality for severe sepsis patients independent of organ dysfunction, and are therefore thought to be a highly useful biomarker for risk-stratifying patients upon presentation to an emergency department \citep{mik09}. Some believe that cryptic septic shock patients should be classified as septic shock patients and admitted to an ICU by default, while others suggest that there may be no benefit to such a protocol; see \citet{jon11} and \citet{riv11} for both sides of the debate. Hence, in addition to comparing ICU versus hospital ward mortality among all severe sepsis patients without hemodynamic septic shock, we would further like to compare mortality within the subgroup of cryptic septic shock patients, as this subgroup may exhibit mortality outcomes that differ from other severe sepsis patients.  While only 10\% of patients admitted to the hospital wards had cryptic septic shock in our sample, this number was 44\% for patients admitted to the ICU.

\subsection{From Observational Study to Idealized Experiment}
Randomization inference provides an appealing framework even when the data are not the result of a randomized experiment. This is in keeping with the advice of  H.F. Dorn, as relayed in \citet{coc65}, that \enquote{the planner of an observational study should always ask himself the question, \enquote{how would the study be conducted if it were possible to do it by controlled experimentation?}} Through matching on observed covariates, we attempt to mimic a well-balanced randomized experiment. Matching methods encourage researcher blinding, since matched sets can and should be constructed without looking at the outcome of interest. Using randomizations within this idealized experiment as the basis for inference also allows us to assess the robustness of a study's finding to unmeasured confounding through a sensitivity analysis. See \citet{obs} for a discussion of using randomization inference within observational studies.

Towards this end, we employ covariate matching to account for measured confounders that may bias our comparison of 60 day mortality rates if all patients had been admitted to the ICU versus if all patients had been admitted to the hospital ward, and then conduct inference with respect to the match that is produced; see \citet{stu10} for a comprehensive overview of common matching algorithms. Full matching, the algorithm used herein, is a type of matching algorithm that optimally assigns individuals into strata consisting of either one treated unit and many control units or one control unit and many treated units, and is particularly appealing for studies where the ratio of treated individuals to control individuals is close to 1:1.  See \citet{ros91} and \citet{han04} for additional details on full matching. 

In Section \ref{sec:mimic}, we discuss the randomized experiment that full matching aims to replicate. We begin our analysis in Section \ref{sec:firstattempt}, where we discuss an issue encountered within our comparison of hospital wards and ICU that is common to many observational studies: an inherent lack of covariate overlap.  In Section \ref{sec:maxbox}, we discuss how the \textit{maximal box problem} marries together existing methods for addressing lack of covariate overlap with the intuitive appeal of a study population whose boundaries are clearly defined in terms of important covariates. 
 
Section \ref{sec:randinf} lays out the necessary framework for conducting inference on the \textit{average treatment effect} in the idealized experiment we aim to uncover. Difficulties arise due to the composite nature of a null hypothesis on the average treatment effect, in that different allocations of potential outcomes can yield the same average treatment effect while inducing different randomization distributions for its estimate. We overcome these difficulties by finding a sharp upper bound on the variance of the estimated average treatment effect over all elements of the composite null, which under a normal approximation allows us to carry out inference for the composite null in question. In Section \ref{sec:inference}, we apply our methodology to our sepsis example. 

Though seemingly unrelated, our solutions for defining an interpretable study population and conducting randomization inference on the average treatment effect with binary outcomes utilize methods from discrete optimization. Traditionally, discrete optimization problems were viewed as tractable if the worst case instance could be solved by an algorithm that grows polynomially in the instance's size, and statistician have typically limited themselves to using algorithms of this type. Both of the problems we pose are $\mathcal{NP}$-hard, meaning that there is no known polynomial time algorithm for the worst case instances of these problems.  However, there have been recent advances in solving typical cases of these problems such that a typical case of these problems can often be solved in a reasonable amount of time \citep{sch03}. In a recent paper, \citet{zub12} highlighted the usefulness of mixed integer programming for attaining well balanced matched strata. We illustrate that when applying the methods described in this paper to our data set, solutions can be attained in a matter of seconds. Through the methods developed in this work, we hope to further emphasize the usefulness of discrete optimization for observational studies and statistics in general.

\section{Review of Causal Inference via Matching}\label{sec:mimic}

\subsection{Notation For a Stratified Randomized Experiment} \label{sec:rand}
Suppose there are $I$ total strata, the $i^{th}$ of which contains $n_i \geq 2$ individuals. In each stratum, $m_i \geq 1$ individuals receive the treatment, $n_i = m_i - n_i$ individuals receive the control, and $\min\{m_i, n_i-m_i\} = 1$. Furthermore, $m_i$ is fixed across randomizations, resulting in $n_i$ distinct assignments to treatment and control for each stratum $i$. Assignments are independent between distinct strata. Under the potential outcomes framework with binary responses, each individual has two potential binary outcomes: one under treatment, $r_{Tij}$, and one under control, $r_{Cij}$, which are 1 if an event would occur and 0 otherwise. The true treatment effect for individual $j$ in stratum $i$ is $\delta_{ij}  = r_{Tij} - r_{Cij}$, and is unobservable as each individual receives either treatment or control. The observed response for each individual is $R_{ij} = r_{Tij}Z_{ij} +r_{Cij}(1-Z_{ij})$, where $Z_{ij}$ is an indicator variable that takes the value 1 if individual $j$ in stratum $i$ is assigned to the treatment; see, for example, \citet{ney23} and \citet{rub74}. Each individual has observed covariates $\x_{ij}$.

There are $N = \sum_{i=1}^I n_i$ individuals in the study, of whom $N_T = \sum_{i=1}^I m_i$ receive the treatment and $N_C = N - N_T$ receive the control. Let $\mathbf{R} = (R_{11}, R_{12},...,$ $R_{I, n_{I}})^T$ and $\mathbf{Z} = (Z_{11}, Z_{12}, ..., Z_{I, n_{I}})^T$. Let $\Omega$ be the set of $\prod_{i=1}^I n_i$ possible values $\z$ of $\mathbf{Z}$ under the given stratification. In a randomized experiment, randomness is modeled through the assignment vector; each $\z \in \Omega$ has probability $1/|\Omega|$ of being selected. Hence, quantities dependent on the assignment vector such as $\Z$ and $\mathbf{R}$ are random, whereas $r_{Tij}$, $r_{Cij}$, $\x_{ij}$ are fixed quantities. Let $\cF = \{r_{Tij}, r_{Cij}, \x_{ij}, i = 1,..,I, j = 1,...,n_i\}$. For a randomized experiment, we can then write that $\P(Z_{ij} = 1 | \cF, \Z \in \Omega) = m_i/n_i$,  $i=1,..,I; j = 1,...,n_i$ and that $\P(\Z = \z| \mathcal{F}, \Z \in \Omega) = 1/|\Omega|$. 

\subsection{Matching and Observational Studies}\label{sec:assume}
In an observational study, we begin with an unmatched study population of size $N$.
Matching methods aim to create strata where the constituent individuals have similar covariate values, or at a minimum similar probabilities of assignment to treatment \citep{ros83, stu10}. Once a match is obtained, the acceptability of the resulting stratification is assessed for covariate balance through the use of various diagnostics, the most common of these being the standardized difference \citep{designofobs}. Let the notation introduced in Section \ref{sec:rand} now apply to the stratification yielded by the matching algorithm. If the match passes the balance diagnostics, randomization inference then proceeds under the assumptions of no unmeasured confounding, common support for the assignment probabilities, and equal probabilities of assignment within a matched strata. The assumption of no unmeasured confounding states that given the observed covariates, the probabilities of assignment to treatment are independent of the potential outcomes, that is $\P(Z_{ij}=1|\x_{ij}) = \P(Z_{ij}=1|\x_{ij}, r_{Tij}, r_{Cij})$, $i=1,...,I; j=1,...,n_i$. This probability is known as the \textit{propensity score}, and we denote it by $e(\x_{ij})$. The assumption of common support for the assignment probabilities can be written as $ 0 < e(\x_{ij}) < 1$, $i=1,...,I; j=1,...,n_i$. Finally, the assumption of equal probability of treatment assignment within a matched strata can be written as $e(\x_{ij}) = e(\x_{ik})$ for all $i = 1,...,I$; $j, k = 1, . . . , n_i$. Under these assumptions, we have that $\P(Z_{ij} = 1 | \cF, \Z \in \Omega) = m_i/n_i$, $i=1,..,I; j = 1,...,n_i$ and that $\P(\Z = \z| \mathcal{F},\Z \in \Omega) = 1/|\Omega|$, thus recovering the randomized experiment described in Section \ref{sec:rand}. 

\section{Lack of Common Support}

\subsection{Imbalance Caused by Limited Covariate Overlap}\label{sec:firstattempt}
We begin by conducting a full match on our entire study population. As was previously noted, we have 30 pre-treatment covariates that were deemed important for both the probability of admission to the ICU versus the ward and for the outcome. Of these, 13 contained missing values; see Appendix A for the percentages of missing observations for these 13 covariates. To account for this, we include 13 new missingness indicators, and fill in the missing values with the mean of the covariates. As is discussed in \citet{ros84} and \citet[][Section 9.4]{designofobs}, this facilitates balancing both the observed covariates and the pattern of missingness between the two groups being compared. We also include an indicator for whether an individual has cryptic septic shock. We thus have 44 covariates that could be used in constructing our matched sets. In determining which variables to match on, the avoidance of various types of ``collider-bias''  \citep{gre03}  must be considered. We first do not control for any post-treatment variables in order to avoid biases that stem from controlling for the consequence of an exposure. One particular type of collider bias, $M$-bias, can be induced even when only controlling for pre-treatment variables. Despite this, we choose to control for all 44 of these pre-treatment covariates because of the work of \citet{din14mbias}, simulation studies of \citet{liu12}, and arguments of \citet{rub09} that suggest that biases stemming from not controlling for a relevant pre-treatment covariate tend to be more substantial than those that are caused by $M$-bias.

We use a rank-based Mahalanobis distance with a propensity score caliper of 0.2 standard deviations as our distance metric between ICU and hospital ward patients, where the propensity scores are estimated via a logistic regression of our covariates on the treatment indicator; for further discussion on the role of propensity score calipers in multivariate matching, see \citet[][Section 8.3]{designofobs}. In addition, we match exactly on the cryptic septic shock indicator, meaning that each stratum produced by the full match must either contain all cryptic septic shock patients or none. We use standardized differences, defined as a weighted difference in means divided by the pooled standard deviation between groups before matching, to assess balance in our resulting matched strata for the remaining covariates \citep{stu08}. A common rule of thumb is to deem the balance of a resulting match acceptable if all absolute standardized differences fall below 0.1 \citep{designofobs}. We modify this rule slightly based on our covariate importance tiers, using thresholds of 0.05, 0.10, and 0.15 for the standardized differences of tiers 1, 2, and 3 respectively. Thus, we require more stringent balance for those covariates that are deemed to be of highest importance for the admission decision and for mortality.

We first perform an unrestricted full matching. Without any restrictions, full matching can produce extremely large strata.  When applied to our data set, there are strata with ratios of hospital ward patients to ICU patients of 37:1, 1:21, 1:32, and 1:65. Noting the potential for outlandishly large strata, \citet{han04} advocates placing a bound on the maximal allowable strata size in order to increase the effective sample size (and thus, the power of the resulting analysis). In keeping with this, we also performed full matches with restricted ratios of hospital ward patients to ICU patients within a stratum, with ratios ranging from 2:1, 1:2 to 15:1, 1:15. Neither the unrestricted full match nor any of the restricted full matches resulted in an adequately balanced matched sample based on our standardized difference thresholds.

Our failure to attain a suitably balanced stratification does not suggest a deficiency with full matching; to the contrary, no matching algorithm should be able to produce a suitably balanced stratification without discarding individuals, as there is a severe lack of covariate overlap between patients admitted to the ICU and patients admitted to the hospital wards. Two covariates that were out of balance in all of the restricted ratio matches were initial serum lactate levels and APACHE II scores. As is described in Section \ref{sec:intro}, initial serum lactate is believed to be important for both the admission decision and for health outcomes. The APACHE II score is a measure of disease severity using physiologic variables and chronic health conditions \citep{kna85}.  As Figure \ref{fig:maxbox} displays, virtually all of the patients admitted to the hospital ward lie in the lower left hand quadrant of the scatterplot of APACHE II scores versus initial serum lactate levels. Naturally, this lack of overlap arises because many ICU patients are more severely ill than any hospital ward patient. We cannot possibly infer the effect of admission to the ICU versus the hospital ward on mortality for the severely ill ICU patients, as we lack patients admitted to the hospital wards with which the outcomes of those ICU patients can be fairly compared. Assessment of causal effects for those individuals would represent an analysis of  ``extreme counterfactuals,'' resulting in an extrapolation to which the data cannot honestly attest \citep{kin06}. Rather, inference about the effect of being admitted to an ICU or a hospital ward on mortality must be restricted to the area of common support (i.e., those patients who were less gravely ill at presentation), a fact to which restricted ratio full matches bear testament in their inability to attain suitable balance.

\begin{figure}[h]
\begin{center}
\includegraphics[scale = .5]{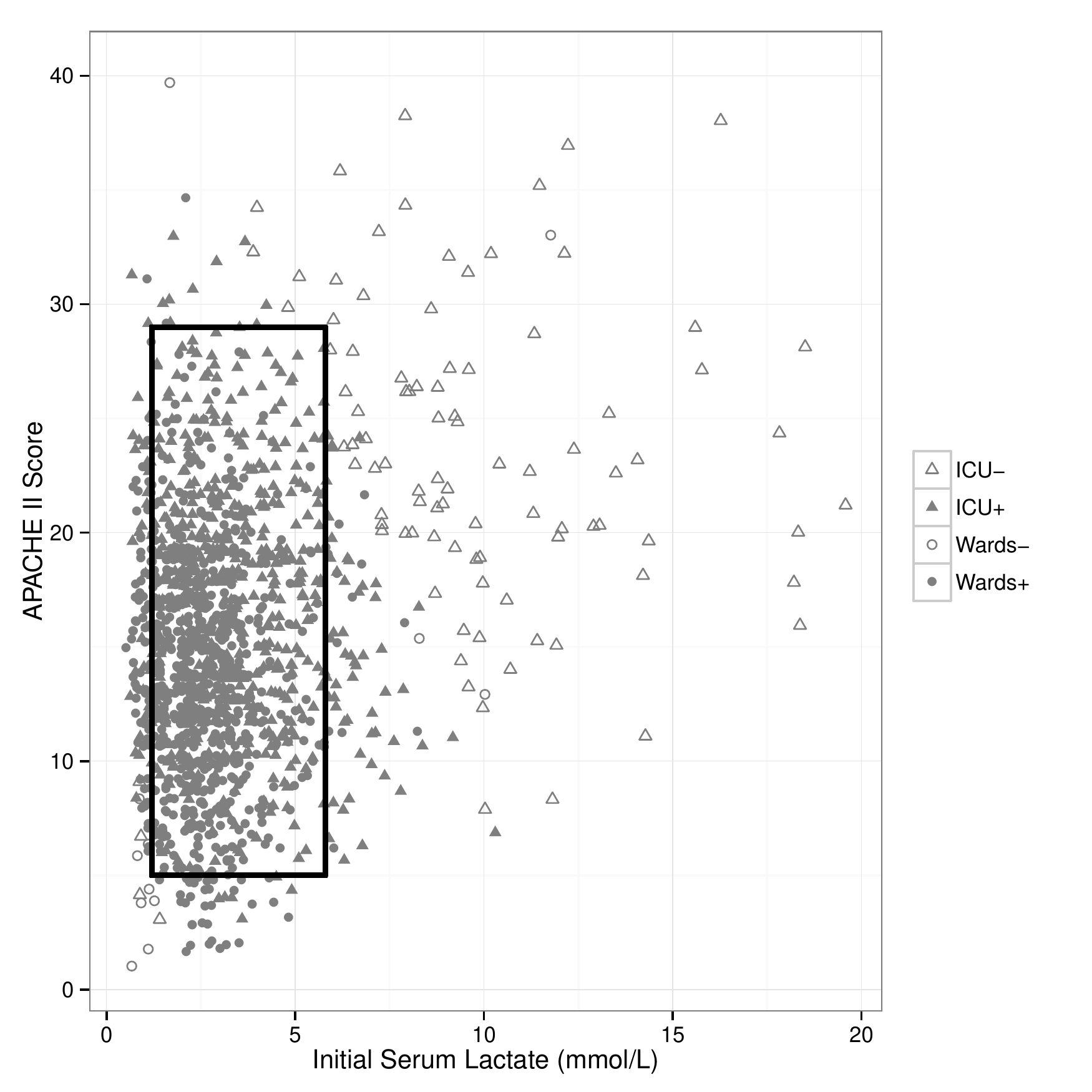}
\caption{\small{Lack of Common Support and the Maximal Box. This figure shows a scatter plot of initial serum lactate levels and APACHE II scores. The plot also shows the \textit{maximal box}, which is the solution to the optimization problem posed in Section \ref{sec:maxbox}. The rectangular boundaries represent the study population identified as having a common support, wherein subsequent inference will be restricted. It was formed by finding the rectangle containing the largest number of filled points, subject to excluding all hollow points in the plot. The triangles represent ICU patients, and the circles represent hospital ward patients.  Whether a point is filled or hollow is described in Section \ref{sec:subpop} in detail, and has to do with whether or not it was determined that a given individual was in the area of viable common support for his or her observed tier 1 covariates.  Points are jittered to avoid overplotting.}}
\label{fig:maxbox}
\end{center}
\end{figure}

\subsection{Different Types of Overlap}

Before proceeding, we discuss a few different notions of covariate overlap.  The first notion, which we call strong overlap, is that for every treated unit in the data, there is a control unit that has similar or the same covariate values and that for every control unit, there is a treated unit that has similar or the same covariate values.  While strong overlap is most desirable and can be readily diagnosed in low dimensions through visual tools such as scatterplots, it is difficult to obtain when there are a moderate or high number of covariates because of the curse of dimensionality.  The second notion, which we call interpolation overlap, is that for any treated unit, an estimate of that treated unit's counterfactual control potential outcome given the unit's covariates can be inferred through an interpolation rather than an extrapolation of the observed control outcomes and that for any control unit, an estimate of the control's unit counterfactual potential outcome can be inferred through interpolation.  \citet{kin06}  present an operational way to check for interpolation overlap  by means of the convex hull of the treated and control covariate distributions. According to their criterion, one is performing interpolation if a given treated (control) individual is in the convex hull of the control (treated) covariate distributions, and is performing extrapolation otherwise. Interpolation overlap then exists if all treated units are in the convex hull of the control units, and all control units are in the convex hull of the treated units. Unfortunately, as noted in \citet{kin06} their interpolation overlap criterion is also difficult to obtain in moderate and high dimensions. In Appendix B, we demonstrate through a simulation study that even when the treated and control covariate distributions are identical, the number of individuals for which ``interpolation'' is identified as being performed by the convex hull diagnostic decreases substantially as the covariate dimension increases.

\subsection{Existing Methods for Achieving Overlap}
A lack of overlap is typically addressed by defining a study population restriction wherein adequate overlap can be attained. Many methods are motivated by the fact that, asymptotically, strong overlap is present if and only if the propensity score at a given covariate value, $e(\x_j)$,  is bounded away from 0 and 1 for all individuals $j \in \{1,...N\}$. In this sense, the propensity score provides a scalar indication of both the existence of and the extent of covariate overlap.  \citet{deh99} recommend removing treated units whose propensity scores are larger than the maximal propensity score among the control units, and removing control units whose propensity score are smaller than the minimal propensity score among the treated units. \citet{cru09} define a study population by seeking the subset of the covariate space which minimizes the efficiency bound for the variance of the study population average treatment effect.  Based on this optimality criteria, they find that for a wide range of distributions a close approximation to the optimal selection rule is to drop all units with estimated propensity scores outside of [0.1, 0.9]. One concern with propensity score approaches for attaining overlap for finite sample inference is that while boundedness away from 0 and 1 implies strong overlap asymptotically, for finite samples treated (control) individuals with nonzero propensity scores may still lack comparable control (treated) individuals in terms of their observed covariates. Another concern is that these propensity scores must be estimated, so that individuals with nonzero estimated propensity scores may nonetheless fall outside the area of overlap.

Other methods directly deal with the covariates themselves when defining a new study population. \citet{kin06} identify a multivariate space wherein one performs interpolation rather than extrapolation by removing treated individuals whose covariates lie outside of the convex hull of the covariates for the control individuals, and removing control individuals whose covariates lie outside of the convex hull of the covariates for the treated individuals. \citet{ros12} describes a method for optimal subsampling wherein one chooses an upper bound on how many treated units can be removed from the resulting matched sample. \citet{hil13} employ Bayesian Additive Regression Trees \citep{chi10} to identify areas of common support, using the fact that the variability of individual-level conditional expectations tend to increase drastically in such areas. Individuals are then classified as being inside or outside the area of common support based on thresholds for these variances.

Though easy to implement and often accompanied by theoretical justifications, the resulting study population returned by these methods is often unappealing as it may be difficult to interpret in terms of the covariates themselves. This makes it difficult to succinctly and transparently describe the individuals to whom the performed inference applies. Furthermore, for study populations defined by propensity scores alone, a researcher's notion of which individuals have high or low ``propensity'' for treatment may be vastly different from the individuals designated as such through fitting a propensity score model to the data. A practitioner not participating in the study could then have a misconception of the individuals to whom the inference applies based on his or her preconceived notion of which individuals are likely to receive treatment or control.  In his \textit{Design of Observational Studies} book, Rosenbaum advises that when excluding extreme individuals ``it is usually better to go back to the covariates themselves, $\x_{j}$, perhaps redefining the population  under study to be a subpopulation of the original population'' \citep[Section 3.3.3]{designofobs}. \citet{stu10} further echoes this sentiment, arguing that ``it can help the interpretation of results if it is possible to define the discard rule using one or two covariates'' \citep[page 15]{stu10}.  

To illustrate the potential confusion arising from a study population definition in terms of propensity scores, suppose we decided to apply the suggestion of \citet{cru09} to our tier 1 covariates in order to define our study population. In its most succinct form, the resulting study population would be defined as $\{i: \text{logit}(3.5 - 0.0049(\text{age}_i) + 0.069(\text{CCI}_i) - 0.46(\text{init. ser. lac}_i) - 0.12(\text{APACHE II}_i)) \in [0.1, 0.9]\}$. The boundaries of this set would likely hold little meaning to practitioners, as it is hard to characterize \textit{qualitatively} the individuals who fall within these bounds. Inference performed on this subset would pertain to a set of individuals who lack a clear characterization on the basis of the covariates of interest themselves, limiting how actionable the findings may be. 

\citet{tra11} suggest a tree based approach for defining an internally valid study population based on values of covariates alone. In the first step of their method, the practitioner uses a pre-existing methods for study population definition of her choice; any of those described at the beginning of this section would be valid choices. For each individual, this outputs an indicator of whether or not that individual belongs to the area of common support (and hence, should be included in the new study population). The user next fits a regression tree of a designated depth that aims to minimize the probability of misclassification, and defines the study population based on the resulting tree (rather than by the method used in the initial step). While resulting in a markedly more interpretable study population, by their very nature trees result in interval restrictions that are path dependent, rather than intervals that are universally applicable for all individuals.

Restrictions to rectangular regions of the covariate space are appealing as they can be explicitly defined in terms of the intersection of a series of intervals, rather than as a complicated function of the observed covariates.  Each interval pertains to a unique covariate, allowing one to paint a coherent description of the resulting study population  through covariate-specific constraints. This allows the practitioner to clearly understand the restriction that each covariate imposes on the study population. Currently, little guidance exists on how to define these covariate based inclusion criterion. Ad-hoc choices based on inspection may discard large proportions of individuals, and further may fail to discard individuals who are identified as problematic.

\subsection{An Attainable Objective}
As outlined in this section, there are inherent difficulties with attaining strong overlap in high dimensions. We thus instead seek to define a study population characterized by three principles which are both attainable and verifiable. Firstly, we would like the study population to demonstrate overlap with respect to those covariates deemed most important for the treatment and the outcome. By limiting ourselves to a small set of important covariates to focus on for overlap, both strong and interpolation overlap can be potentially obtained for a reasonably sized study population. Furthermore, the overlap with respect to these most important covariates can be verified using visual aids such as scatterplots. Secondly, our study population should be such that balance can be attained on all covariates. As balance is a property of the marginal distributions for the treated and control individuals, standard metrics such as standardized differences can speak to balance being attained for all covariates. Finally, we would like our study population to have a simple definition in terms of important covariates while not being overly wasteful in discarding individuals.

Our approach to achieving these goals is two-fold. We begin by constructing, through the solution to the \textit{maximal box problem}, a study population that incorporates existing methods for identifying individuals outside the area of common support with respect to important covariates, retains as many viable individuals as possible, and is readily interpretable based on important covariates as it defined through the intersection of interval restrictions. After this, we use full matching to arrive upon a stratification that mimics a well-balanced randomized experiment within this study population. We then proceed with inference in the resulting study population only if the balance on all covariates is deemed acceptable.

\section{Defining a Study Population}\label{sec:maxbox}
\subsection{The Maximal Box Problem}
A box $[\bl, \bu]$ is defined to be a closed interval (hyperrectangle) of $\R^p$,
\begin{align*} [\boldsymbol\ell, \bu] &:= \{\x \in \R^p: \ell_i\leq x_i \leq u_i \; \forall i \in \{1,..,p\}\} \end{align*}

Suppose one has a finite collection of vectors $\{\x_{j}\}, j = 1,...,N,$ that can be partitioned into two disjoint sets of ``positive'' points, $\cX^+$ and ``negative'' points, $\cX^-$. The maximal box problem aims to find the lower and upper boundaries of a box, $[\tilde{\bl}, \tilde{\bu}]$, such that the corresponding box contains the maximal number of points in $\cX^+$ while containing none of the points in $\cX^-$. Explicitly, $[\tilde{\bl}, \tilde{\bu}]$ is the $\arg\max$ of the following optimization problem (MB, for maximal box):
\begin{align}
\tag{MB} 
\label{eq:MB}\text{maximize } & \;\; |[\bl, \bu] \cap \cX^+|\\
\text{subject to } &\; \; |[\bl, \bu] \cap \cX^-| = 0,\nonumber
\end{align}where the notation $|A|$ denotes the number of elements of set $A$. Henceforth, we refer to $|[\bl, \bu] \cap \cX^+|$ as the \textit{cardinality} of a box $[\bl, \bu]$.

\citet{eck02} describe the problem in detail. They prove that the problem is $\mathcal{NP}$-hard in general, but is polynomial time for any fixed dimension $p$. They provide an efficient branch and bound algorithm for solving it, which they show to have modest computation time in practice. They also provide a mixed integer programming formulation of the problem, which facilitates its use with freely available and commercial solvers.

\subsection{From Maximal Boxes to Study Populations}
Let $\mathbf{D}(\x_{j}, \X, \Z)$ be a binary decision rule that determines whether a point $\x_{j}$ needs to be excluded from the analysis to ensure covariate overlap. For example, the recommendations of \citet{deh99}, denoted $\mathbf{D}_{DW}(\x_{j}, \X, \Z)$, and the rule proposed in \citet{cru09} (the simplified version of the rule), denoted as $\mathbf{D}_{C}(\x_{j}, \X, \Z)$, can be written in this form as:
\begin{align*}\mathbf{D}_{DW}(\x_{j}, \X, \Z) &= \begin{cases} \1\left\{\hat{e}(\x_{j}) \leq \max\{\hat{e}(\x_{k}) \text{ s.t. } Z_{k} = 0\} \right\}& \text{if  } Z_{j} = 1\\
 \1\left\{\hat{e}(\x_{j}) \geq \min\{\hat{e}(\x_{k}) \text{ s.t. } Z_{k} = 1\} \right\}& \text{if  } Z_{j} = 0\\
\end{cases}\\
\mathbf{D}_{C}(\x_{j}, \X, \Z) &= \1\left\{\hat{e}(\x_{j}) \in [0.1, 0.9]\right\}
\end{align*}

Our sets of positive and negative points are then defined based on the decision rule, with $\cX^+ := \{\x_{j}: \mathbf{D}(\x_{j}, \X, \Z)  = 1\}$, and $\cX^ -:= \{\x_{j}: \mathbf{D}(\x_{j}, \X, \Z)  = 0\}$. We then solve (\ref{eq:MB}) using these designations of positive and negative points. The resulting maximal box is one that contains the largest possible number of observations who could feasibly have been in the study population, while eliminating all individuals who were designated for exclusion. The study population defined by the maximal box has a clear interpretation in terms of the covariates themselves: an individual is in the study population if $\tilde{\bl} \leq \x_{j} \leq \tilde{\bu}$, and is excluded otherwise.

We note that as $p$ (the number of  covariates used to define the maximal box) increases, the number of positive points in corresponding maximal box is non-decreasing. At the same time, this increases the potential computational burden, as there are at most $|\mathcal{X}^+|^{2p}$ possible candidates for the boundaries of the maximal box \citep{eck02}. Thus, in practice we recommend forming the boundaries on the maximal box based on values of the most important covariates. Note that defining a study population on the basis of important covariates can also be justified on the basis of interpretability. If one defined a study population using a maximal box formed from a large number of covariates, the resulting study population would likely be just as cryptic as one determined solely by the estimated propensity scores. Further, \citet{hil13} argue that methods for common support restriction should primarily consider those covariates that are most important for the outcome. As such, we seek to define a study population based on the most important pre-treatment covariates. We also recommend using covariates that are not binary for constructing the maximal box as the resulting restriction would either eliminate one of the categories entirely, or (more commonly) be the whole range [0,1]. If there is a binary covariate of considerable importance, we recommend accounting for it by either exactly matching or almost exactly matching on the binary covariate \citep[][Sections 9.1 and 9.2]{designofobs} for details.

There is a possibility that the resulting maximal box only contains a small fraction of the positive points. This means there is no easy way to define a region of good overlap between the treated and control individuals without eliminating the vast majority of the data. In Appendix C, we discuss an extension of the maximal box problem posed in \citet{eck02} that may be appropriate in this setting. This generalization allows for a small number, $C$, of points marked for exclusion (negative points) to be included within the bounds of the maximal box, which would in turn allow for the incorporation of more positive points; see Appendix C for more discussion on the ramifications of choosing $C>0$. In our example, we proceed with $C=0$, thus requiring the exclusion of all points marked as being outside the area of viable support.

\subsection{Application to Our Original Population}\label{sec:subpop}
As defining a maximal box with all 44 covariates would yield a highly unwieldy 44 dimensional box with limited interpretability, we instead aim to construct a maximal box using our four tier 1 covariates: age, Charlson comorbidity index, APACHE II scores, and initial serum lactate levels. Our approach is to fit a propensity score model using a logistic regression on our four tier 1 covariates, and to then employ the simplified criterion of \citet{cru09} with these propensity scores to determine which observations had to be removed. We use this reduced propensity score model because individuals within the area of common support on our important variables may be nonetheless extreme with respect to other, less important, covariates, which may in turn lead to them being marked for removal if we used the full propensity score model. As our focus is on attaining covariate overlap \textit{and} balance for our most important variables while seeking balance on all other variables, we wanted our exclusion metric to reflect lying in the area of covariate overlap with respect to our most important variables. See Appendix D for a more detailed discussion of this goal and the behavior of alternative strategies. Denoting the tier 1 covariate for individual $j$ as $\x^{(1)}_{j}$, our decision rule is $\mathbf{D}_{C, \text{Tier1}}(\x_{j}, \X, \Z) = \1\left\{\hat{e}\left(\x^{(1)}_{j}\right) \in [0.1, 0.9]\right\}$. This results in  108 individuals being marked for exclusion. We have implemented the branch and bound algorithm of \citet{eck02} in the \texttt{R} programming language \citep{R14}, and used it to find our study population; a script for our implementation is provided in the supplementary materials. For this data set, our implementation took 2 seconds to run on a desktop computer with a 3.40 GHz processor and 16.0 GB RAM. 

We created a maximal box using all four tier 1 covariates, and also created one using only initial serum lactate and APACHE II scores. The cardinalities of these boxes were very close to one another (1214 and 1208 respectively). As such, we use the box defined using only initial serum lactate and APACHE II scores for enhanced interpretability. The resulting maximal box is displayed as the rectangle in Figure \ref{fig:maxbox}. As can be seen, the study population under investigation can be explicitly defined as those individuals in our initial study whose APACHE II scores are between 5 and 29 and whose initial serum lactate levels are between 1.2 and 5.8 mmol/L. Our study population thus restricts analysis to those individuals who had less severe, but not the least severe, conditions upon presentation to the emergency department. The study population defined by the maximal box includes 701 out of 812 patients admitted into the wards and 507 out of 695 patients admitted to the ICU, resulting in 1208 out of the original 1507 individuals being available for further study; furthermore, it contains 86.3\% of all individuals whose estimated propensity scores were deemed acceptable by our decision rule. Table \ref{tab:covariates} shows the means and standard deviations of the tier 1 covariates among this study population; values for the other covariates can be found in Appendix A. As can be seen, restricting ourselves to this study population improved pre-matching balance for many of the covariates.

We now proceed with a full matching on our study population of 1208 individuals whose condition upon presentation was less severe. We first refit our propensity score model on this study population to exploit the so-called balancing property of the propensity score within our population of interest \citep{ros83}. We use a rank-based Mahalanobis distance with a propensity score caliper of 0.2 standard deviations computed with respect to this study population alone to define distance between ICU and hospital ward patients. Further, we require exact matching for the cryptic septic shock indicator. Given our distances, we run a series of full matches ranging from most restrictive to less restrictive until a suitably balanced matched sample could be attained. We found that a 1:7, 7:1 restricted full matching was able to adhere to the standardized difference tolerances defined in Section \ref{sec:firstattempt}, as is displayed in Figure \ref{fig:balancesub}. 

\begin{figure}[h]
\begin{center}
\includegraphics[scale = .5]{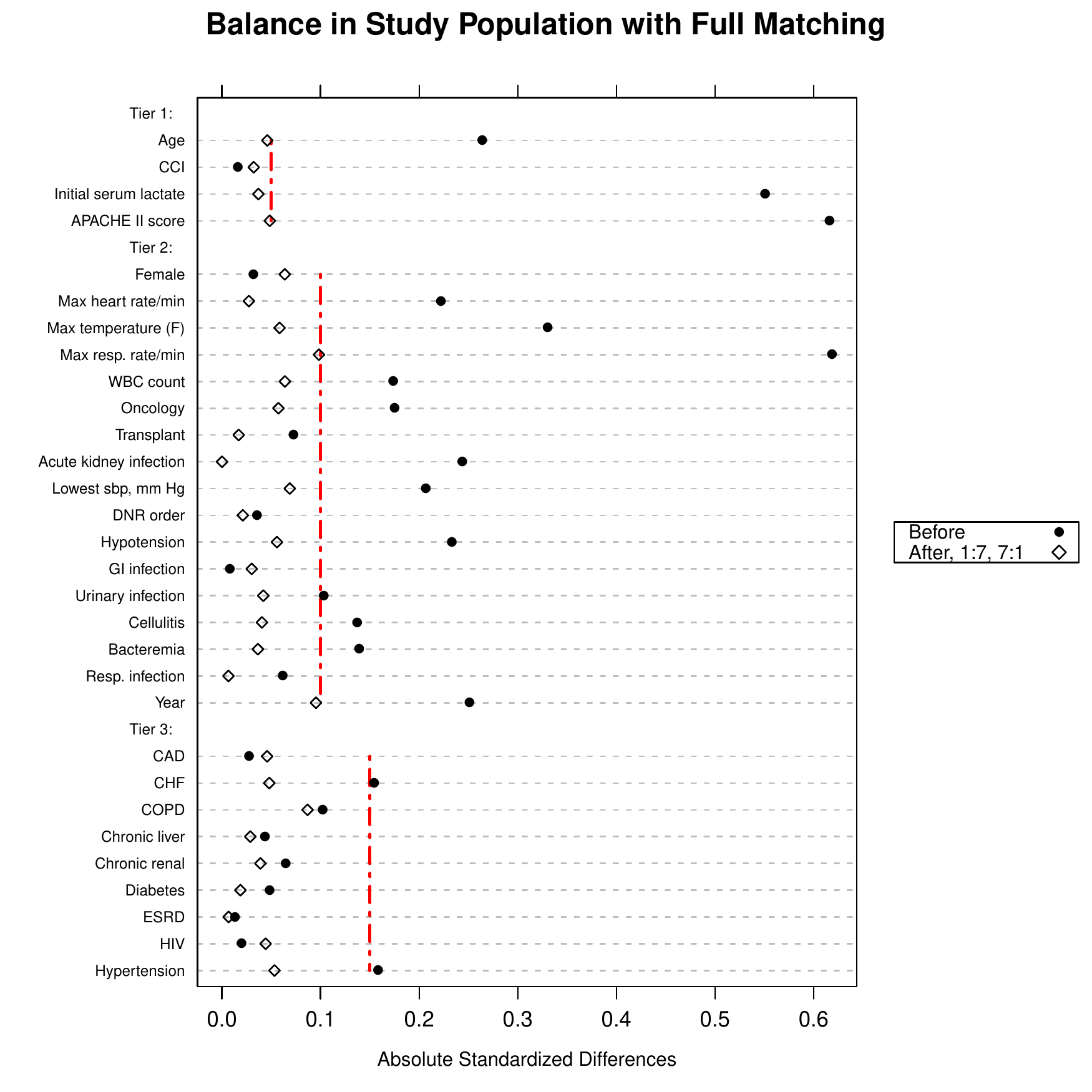}
\caption{\small{Covariate Imbalances Before and After Full Matching, Study Population. The dotplot (a Love plot) shows the absolute standardized differences without matching, and after conducting a restricted 1:7, 7:1 full matching on our study population. The vertical dotted lines correspond to the standardized difference tolerances for each of the three covariate tiers. Although not shown here, all standardized differences corresponding to our 13 missingness indicators had standardized differences below 0.1, indicating that the pattern of missing data was also balanced between the ICU and hospital ward groups.}}
\label{fig:balancesub}
\end{center}
\end{figure}

With our match deemed suitably balanced, we are ready to proceed with inference within the stratified randomized experiment we have aimed to replicate. Our goal is to assess not only whether there is a substantial difference in mortality rates depending on admission to the ICU or the hospital ward, but also to measure the extent of the effect. In order to do so, we now discuss performing inference and constructing confidence intervals for the average treatment effect.

\section{Randomization Inference for the Average Treatment Effect with Binary Outcomes}\label{sec:randinf} \label{sec:ate}
The average treatment effect with binary outcomes (also known as the causal risk difference) is the difference between the proportion of positive responses among the potential outcomes under treatment and the potential outcomes under control, $\delta := (1/N)\sum_{i=1}^I \sum_{j=1}^{n_i}\delta_{ij}$. It it identifiable under the assumption of strong ignorability \citep{ros83}, and an unbiased estimator of the average treatment effect under a stratified design is given by  $\hat{\delta} := \sum_{i=1}^I(n_i/N)\hat{\delta}_i$, where $\hat{\delta}_i =\sum_{j=1}^{n_{i}}\left( Z_{ij}R_{ij}/{m_i} - (1-Z_{ij})R_{ij}/(n_i - m_i)\right)$ is the estimated average treatment effect within stratum $i$ \citep[Section 2.5]{obs}. 

We consider tests of the null hypothesis that $(1/N)\sum_{i=1}^I \sum_{j=1}^{n_i}\delta_{ij} = \delta_0$, $\delta_0 \in \{d/N: d \in [-N, N]\cap \mathbb{Z}\}$, where $\mathbb{Z}$ denotes the set of all integers. In reality, a null hypothesis of this form is a large collection of hypotheses on the set of treatment effects, $\bdelta = [\delta_{11}, \delta_{12}, ..., \delta_{I, n_{i}}]$. Let $\mathcal{D}_{\delta_0}$ be the set of all $\bdelta$ such that $(1/N)\sum_{i=1}^I \sum_{j=1}^{n_i}\delta_{ij} = \delta_0$ and such that the treatment effects are compatible with the observed data. The latter requirement means that if unit $j$ in stratum $i$ received the treatment in the observed experiment, the value of $r_{Tij}$ is fixed at $R_{ij}$ and hence $\delta_{ij}$ can equal either $R_{ij}$ or $R_{ij}-1$. If said unit received the control, the value of $r_{Cij}$ is fixed at $R_{ij}$, and $\delta_{ij}$ can equal either $-R_{ij}$ or $1-R_{ij}$. To reject a null hypothesis $(1/N)\sum_{i=1}^I \sum_{j=1}^{n_i}\delta_{ij} = \delta_0$, we require that we reject the null hypothesis that the allocation of treatment effects equals $\bdelta$ for all $\bdelta \in \mathcal{D}_{\delta_0}$.

\subsection{Existing Methods}
\label{sec:ney}
Inspired by the work of \citet{ney23}, randomization inference for average treatment effect across randomization is typically conducted by finding a consistent estimator of an upper bound on the variance of the estimated ATE resulting in randomization inference that asymptotically has the proper Type I error rate; see \citet{din14} among many. \citet{rob88} improves upon the upper bound of \citet{ney23} for binary outcomes under an unstratified design and uses the resulting upper bound to create confidence intervals that are narrower than those based on a Wald-type procedure. More recently, \citet{aro14} provide asymptotically sharp upper bounds on $\var(\hat{\delta})$ under general potential outcomes.

For a stratified design, the variance for the estimated ATE is
\begin{align}\label{eq:ubstrat}\var(\hat{\delta}) &=  \sum_{i=1}^I \frac{n_i^2}{N^2} \left(\frac{S_{Ti}^2}{m_i} + \frac{S_{Ci}^2}{n_i - m_i} - \frac{S_{\delta i}^2}{n_i}\right) \end{align}
where $S_{Ti}^2 = \sum_{j=1}^{n_i}(r_{Tij} - \bar{r}_{Ti})^2/(n_i - 1)$,  $S_{Ci}^2 = \sum_{j=1}^{n_i}(r_{Cij} - \bar{r}_{Ci})^2/(n_i - 1)$, and $S_{\delta i}^2 = \sum_{j=1}^{n_i}(\delta_{ij} - \bar{\delta}_{i})^2/(n_i- 1)$. The procedures of \citet{ney23}, \citet{rob88} and \citet{aro14} can be readily extended to stratified designs where $m_i$ and $n_i - m_i$ are sufficiently large for each stratum $i$. However, these procedures have deficiencies when there are strata for which either $m_i$ or $n_i- m_i = 1$, as these procedures require an estimate of the variance of the treated and control groups in each strata. When $m_i$ or $n_i-m_i = 1$, unbiased estimators for $S_{Ti}^2$ or $S_{Ci}^2$ do not exist. Matched strata returned by pair matching, fixed ratio matching, variable ratio matching and full matching have this property, rendering the existing bounding techniques based solely on in-sample estimates inapplicable.

\citet{rig14} present two methods for conducting randomization inference and constructing confidence intervals for the average treatment effect with binary outcomes in an unstratified design. The first method proceeds by combining two tests on the \textit{attributable effect} of \citet{ros01} and \citet{ros02att} through means of a Bonferroni correction. They then mention that this approach, while potentially conservative, can be readily applied to stratified randomized experiments. In the second method, hypothesis testing proceeds by conducting randomization inference on $\delta$ for all $\bdelta \in \cD_{\delta_{0}}$, meaning that this procedure has level $\alpha$ for testing the corresponding composite null. Confidence intervals are then constructed by inverting tests for values under the null $\delta_0 \in \{d/N: d \in [-N, N]\cap \mathbb{Z}\}$, where $\mathbb{Z}$ again denotes the set of all integers. In their description, inference is conducted by explicitly performing a randomization test for each $\bdelta \in \cD_{\delta_{0}}$.  Noting the inherent computational burden in this process as $N$ increases in an unstratified experiment, they suggest a Monte-Carlo procedure to approximate the required permutation test. For stratified experiments, they suggest that this approach becomes computationally unwieldy quite quickly, thus advocating the use of a potentially conservative method based on the attributable effect in this setting.

Our procedure combines elements of the classical Neyman approach and the hypothesis test inversion approach of \citet{rig14}. Our approach is not purely Neymanian in that although we are testing Neyman's null hypothesis, we do not proceed by seeking a consistent upper bound on $\var(\deltahat)$; rather,  we explicitly compute the largest value of $\var(\deltahat)$ possible among the elements of $\cD_{\delta_{0}}$ for each null hypothesis. The resulting bound on the variance of the average treatment effect for a given null hypothesis is sharp, as it is attained by a member of the composite null $\bdelta^* \in \cD_{\delta_{0}}$. As a test of a composite null hypothesis is size $\alpha$ only if the supremum over all elements of the composite null of the probability of rejection is $\alpha$, asymptotically our testing procedure has exactly size $\alpha$ asymptotically so long as a normal approximation is justified. This is because since the numerator is the same for the test statistic for any null in $\mathcal{D}_{\delta_0}$, namely $\hat{\delta}-\delta_0$, the p-value computed under a normal approximation will be maximized by the member of the composite null with the largest denominator of the test statistic, i.e, the member with the largest variance. Rejection on the basis of this worst-case $p$-value then implies rejection for all elements of the composite null. For finite samples, discrepancies in actual versus advertised size stem only from the strength of the normal approximation. We show in Appendix E that for our case study, the true distribution of the average treatment corresponding to the worst-case allocations of potential outcomes is well approximated by a normal distribution. In Appendix F, we discuss why our standard errors are necessarily larger than those attained in other common scenarios (for example, in testing Fisher's sharp null).

As will be discussed in Section \ref{sec:opt}, the use of a normal approximation allows us to overcome the computational issues encountered in \citet{rig14}. This normal approximation can be justified under very mild conditions. Let $\sigma^2_i = n_i^2(S_{Ti}^2/m_i+ S_{Ci}^2/(n_i - m_i) - S_{\delta i}^2/{n_i})$ be the contribution to $\var(N\hat{\delta})$ from strata $i$ (i.e., $\sum_{i=1}^I\sigma_i^2/N^2 = \var(\deltahat))$, and let $\Sigma = \sum_{i=1}^I\sigma^2_i$. Let $n^*$ be an upper bound on the maximal size of a stratum. 

\begin{theorem}{If $\Sigma \rightarrow \infty$ as $I\rightarrow \infty$, then $(N\hat{\delta} - N\delta)/\sqrt{\Sigma} \overset{d}{\rightarrow} \mathcal{N}(0, 1)$.} \end{theorem}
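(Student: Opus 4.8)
The plan is to recognize the statement as a Lyapunov central limit theorem for a triangular array of independent, uniformly bounded summands, where the assumed bound $n^*$ on stratum size together with the binary outcomes supplies exactly the moment control needed. First I would rewrite the centered, rescaled statistic as a normalized sum of independent contributions. Setting $W_i = n_i \deltahat_i$, the independence of treatment assignments across distinct strata (Section \ref{sec:rand}) makes $W_1,\dots,W_I$ independent, and $N\deltahat = \sum_{i=1}^I W_i$. Since $\deltahat$ is unbiased for $\delta$, we have $\E[N\deltahat] = N\delta$, so $N\deltahat - N\delta = \sum_{i=1}^I (W_i - \E[W_i])$. Comparing the variance formula (\ref{eq:ubstrat}) with the definition of $\sigma_i^2$ gives $\var(N\deltahat) = N^2\var(\deltahat) = \sum_{i=1}^I \sigma_i^2 = \Sigma$, and by independence across strata $\var(W_i) = \sigma_i^2$. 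Thus the object of interest is $\Sigma^{-1/2}\sum_{i=1}^I (W_i - \E[W_i])$, a sum of independent mean-zero terms normalized by the square root of its total variance, which is precisely the setting of a triangular-array CLT.

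Next I would establish uniform boundedness and use it to verify Lyapunov's condition. Because the outcomes are binary, each within-stratum estimate satisfies $\deltahat_i \in [-1,1]$, so $|W_i| \le n_i \le n^*$ and hence $|W_i - \E[W_i]| \le 2 n^*$ for every $i$ and every $I$. Taking the Lyapunov exponent $2+\delta = 3$ and dominating one factor by this uniform bound yields
\[
\E\,|W_i - \E[W_i]|^{3} \le 2 n^* \, \E\,|W_i - \E[W_i]|^{2} = 2 n^* \sigma_i^2 ,
\]
so that $\sum_{i=1}^I \E\,|W_i - \E[W_i]|^{3} \le 2 n^* \Sigma$. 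The Lyapunov ratio is therefore controlled by
\[
\frac{1}{\Sigma^{3/2}} \sum_{i=1}^I \E\,|W_i - \E[W_i]|^{3} \;\le\; \frac{2 n^*}{\sqrt{\Sigma}} \;\longrightarrow\; 0 ,
\]
which tends to $0$ as $I \to \infty$ because $\Sigma \to \infty$ by hypothesis while $n^*$ remains fixed. Lyapunov's CLT for triangular arrays then delivers the claimed convergence of $(N\deltahat - N\delta)/\sqrt{\Sigma}$ to $\cN(0,1)$.

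The main obstacle, and really the only place requiring care, is that this is genuinely a triangular-array limit: as $I$ grows, both the strata and their sizes change, so no ordinary i.i.d.\ CLT applies and one must check a Lindeberg- or Lyapunov-type condition directly. The assumption that stratum sizes are bounded by a fixed $n^*$ is exactly what makes that condition verifiable; without such a bound (for instance, if a handful of strata were permitted to grow and dominate $\Sigma$), the normalized sum could fail to be asymptotically normal, so the role of $n^*$ is essential rather than cosmetic. I expect the remaining work to be routine bookkeeping, namely confirming $\var(W_i) = \sigma_i^2$ from (\ref{eq:ubstrat}) and the unbiasedness used to center the statistic, both of which are already supplied earlier in the text.
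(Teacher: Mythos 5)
Your proposal is correct and follows essentially the same route as the paper: the paper also bounds each summand $n_i\hat{\delta}_i$ by $n^*$ using the binary outcomes and then invokes Lyapunov's CLT for independent bounded summands (citing a corollary whose content is exactly the third-moment verification you carry out explicitly), with $\Sigma\rightarrow\infty$ supplying the needed divergence of the total variance. Your write-up simply makes the Lyapunov ratio computation explicit rather than citing it.
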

\begin{proof}
Since our outcomes are binary, the maximal contribution of an individual summand $ n_i\hat{\delta}_i$ to $\sum_{i=1}^In_i\hat{\delta}_i = N\hat{\delta}$ is bounded in absolute value by $n^*$. Using Lyapunov's central limit theorem applied to a sequence of independent bounded random variables \citep[Corrolary 2.7.1]{leh04}, we have that $(N\hat{\delta} - N\delta) \overset{d}{\rightarrow} \mathcal{N}(0, \Sigma)$ as $I\rightarrow \infty$ provided that $\Sigma\rightarrow\infty$ as $I\rightarrow\infty$.
\end{proof}

This requirement precludes a certain type of degeneracy. Namely, it cannot be the case that only finitely many strata have nonzero variances for $\hat{\delta_i}$. This, coupled with a bound on the maximal strata size, suffices for asymptotic normality to hold.

\subsection{Integer Programming for the Maximal Variance}\label{sec:opt}
In theory, the maximal variance for a given composite null, $H_0: \delta = \delta_0$, could be found by enumerating all $2^N$ possible allocations of unobserved binary potential outcomes, computing $\var(\hat{\delta})$ through (\ref{eq:ubstrat}) for each allocation, and finding the maximal variance among the allocations that satisfy $\bdelta \in \mathcal{D}_{\delta_{0}}$. Such a na\"{\i}ve approach quickly becomes computationally infeasible: in our application, this would would require enumerating $2^{1208}$ sets of potential outcomes. 

Our approach is to instead pose the problem of maximizing the variance within a composite null as an integer program. Roughly stated, the resulting integer program optimizes the variance over the values of the unobserved potential outcomes, subject to the resulting allocation of potential outcomes being a member of the composite null. Though many equivalent formulations of the desired optimization problem are possible, the one we choose explicitly avoids symmetric solutions, known to cripple the computation time of integer programs \citep{mar10}, by having each decision variable correspond to a unique distribution on the contribution to the overall estimated average treatment effect from a given stratum. Our approach exploits three essential facts. Firstly, there is symmetry \textit{between} strata in that (a) $n_i = n_{i'}$, (b) $\sum_{j=1}^{n_i}Z_{ij}R_{ij} = \sum_{j=1}^{n_i'}Z_{i'j}R_{i'j}$ and (c) $\sum_{j=1}^{n_i}(1-Z_{ij})R_{ij} = \sum_{j=1}^{n_i'}(1-Z_{i'j})R_{i'j}$ for symmetric strata $i$ and $i'$, meaning that any allocation of potential outcomes for strata $i$ is also a feasible allocation for strata $i'$. Secondly, there is symmetry \textit{within} strata in that $Z_{ij} = Z_{ik}$ and $R_{ij}=R_{ik}$ for symmetric individuals $j$ and $k$ in stratum $i$, meaning that the $\var(\hat{\delta}_i)$ remains the same if the values for the unobserved potential outcome are permuted among symmetric individuals in stratum $i$. Finally, there is independence between strata which allows us to sum strata-wise variance contributions together to arrive at the overall variance of the estimated average treatment effect. In combination, these three facts allow this seemingly daunting optimization problem to be solved in a matter of seconds. See Appendix G for a detailed discussion of our integer programming formulation.

\section{Inference for Severe Sepsis Mortality}\label{sec:inference}
We now proceed with randomization inference on the study population defined by our maximal box in Section \ref{sec:subpop}. As a reminder, this consists of severe sepsis patients without hemodynamic septic shock, with initial serum lactate between 1.2 and 5.8 mmol/L, and with APACHE II scores between 5 and 29. Of the 1208 patients in our study population, 701 were admitted to the hospital ward and 507 were admitted to the ICU. Our causal estimand is the difference between 60 day mortality rates if all patients had been admitted to the ICU and if all patients had been admitted to the hospital ward. Before matching, the unadjusted (and hence potentially biased) estimates for these rates under ICU and hospital ward admissions are 24.3\%  and 12.0\% respectively overall, and are 27.7\% and 21.2\% respectively within the cryptic septic shock subgroup.

%After adjusting for measured confounders through covariate matching, the estimated mortality rates under ICU and hospital ward admission are 19.4\% and 15.1\% respectively overall, and are 26\% and 26.8\%  respectively within the cryptic septic shock subgroup.  Table \ref{tab:results} shows the estimated average treatment effects (the differences between proportions under ICU and hospital ward admission) both in our overall study population and among the cryptic septic shock subgroup. We also report 95\% confidence intervals, which were formed by inverting a series of hypothesis tests as discussed in Section \ref{sec:opt}. Initializing the constants of the problem that do not vary with the null being tested took 1.4 seconds, and the integer program required for each hypothesis test took 0.1 seconds solve using \texttt{Gurobi} and 0.2 seconds to solve using  the freely available \texttt{lpSolve} on a desktop computer with a 3.40 GHz processor and 16.0 GB RAM. This demonstrates that confidence intervals can be constructed using our integer programming formulation efficiently using both commercial and freely available solvers.

After adjusting for measured confounders through covariate matching, the estimated mortality rates under ICU and hospital ward admission are 19.4\% and 15.1\% respectively overall, and are 26.0\% and 26.8\%  respectively within the cryptic septic shock subgroup.  Table \ref{tab:results} shows the estimated average treatment effects (the differences between proportions under ICU and hospital ward admission) both in our overall study population and among the cryptic septic shock subgroup. We also report 95\% confidence intervals, which were formed by inverting a series of hypothesis tests as discussed in Section \ref{sec:ney}. Both of these confidence intervals contain 0, indicating that we lack substantial evidence to suggest that there is a nonzero effect both overall and in the cryptic septic shock subgroup. Through our implementation, we were able to construct the reported confidence intervals in 0.42 seconds using \texttt{Gurobi}, and 0.72 seconds to solve using the freely available \texttt{lpSolve} on a desktop computer with a 3.40 GHz processor and 16.0 GB RAM. This demonstrates that confidence intervals can be constructed using our integer programming formulation efficiently using both commercial and freely available solvers. 

\begin{table}
\begin{center}
\caption{Estimated differences in severe sepsis mortality between patients admitted to the ICU and the hospital ward in our study population, both overall and among patients with cryptic septic shock. Positive values favor hospital ward admission, and negative values favor ICU admission. The standard errors reported are the Wald-estimates $\sqrt{\nu^T\x^{*}_{\hat{
\delta}}}$ and confidence intervals were constructed by inverting a series of tests as described in Section \ref{sec:ney}. }
\vspace{.1 in}
\begin{tabular}{c c c}
\multicolumn{1}{c}{} & \multicolumn{1}{c}{Overall} & \multicolumn{1}{c}{Cryptic S.S.} \\
\hline \hline
Estimated ATE & 4.3\%&-0.8\%  \\
(SE) & (3.7\%)  & (9.0\%)\\
\hline
95\% Conf. Int. & [-3.0\%; 11\%]  &[-18\%; 17\%] \\
\hline
\end{tabular}
\label{tab:results}
\end{center}
\end{table}

\section{Discussion}

As expected, we found that common support was not present for the most severely ill sepsis patients. The subset of septic shock patients, which include those with hemodynamic compromise or evidence of hypoperfusion, are routinely admitted directly to the ICU and therefore an observational study cannot address the effect of these triage decisions.For the population with substantial common support, our findings suggested that there was no clear benefit to direct ICU admission for non-shock, severe sepsis patients. In fact, recognizing our wide confidence intervals, the magnitude of the potential benefit of direct ICU admission after adjusting for all measured confounders through matching at the leftmost extreme of our confidence interval was relatively small at 3\%. While larger studies are required to substantiate our findings, our analysis suggests that the common practice in hospitals with strained ICUs (occupancy rates approaching 100\%) to defer ICU admission for many severe sepsis patients does not result in demonstrable harm to the patients.

By using the maximal box problem to define a study population for further analysis, we arrived at a study population with a readily interpretable interpretation in terms of important covariates wherein acceptable balance could be attained. One downside of our method is that it is not guaranteed that suitable balance can be attained in the returned study population. That is, one may arrive at a study population defined in terms of important covariates where it is difficult to find a matching procedure that attains suitable balance on all covariates. One option is to simply iterate: covariates for which suitable balance cannot be achieved can be used in defining a study population through the maximal box problem, and then one could again try to attain balance within the proposed study population. An interesting area for future research would be to create a procedure where the returned study population is guaranteed to have a match with acceptable balance. With fixed ratio matching, recent work on mixed-integer programming matching \citep{zub12} and cardinality-matching \citep{zub14} may provide insight into how to incorporate the balancing constraints into the optimization problem. 

In our application, we determined which covariates were most important for the treatment and the outcome (and hence those for which we seek verifiable overlap) through consultation with subject matter experts. In other applications, practitioners may not want to rely solely on prior information for determining which covariates are important and rather allow the data itself to attest to this. While model selection for the propensity score model can be conducted without concern, one must be careful when assessing the impact of covariates on the outcome variable as it could potentially bias the resulting inference by compromising the ``researcher blinding'' that makes matching so appealing \citep{rub06}. One path forward would be to employ sample splitting, thus assessing importance of covariates for the outcome using data that is not involved in the matched analysis.

Through our analysis of the impact of ward versus ICU admission on 60 day mortality rates, we have shown that the applicability of discrete optimization in causal inference extends far beyond matching algorithms. In fact, discrete optimization provides a powerful set of tools for solving many problems common to observational studies and, more broadly, statistics in general. The availability of efficient solvers can serve as the impetus for new methods that trade potentially unverifiable model assumptions for an increase in computation time. This is not to say that computational burden should not be considered when developing statistical methodology; rather, it is to caution against limiting the imagination solely on the basis of the computational power of the present day. As history has borne out, what is intractable today may be feasible tomorrow. 

\begin{center}
{\large\bf SUPPLEMENTARY MATERIAL}
\end{center}
\begin{description}
\item[Appendices] Appendix A provides summary statistics for all of the covariates used in matching. In addition, it contains the percentages of missingness for the 13 covariates with missing values. Appendix B demonstrates the difficulty of verifying interpolation overlap in moderate and high dimensions. Appendix C discusses an extension of the maximal box problem that allows for including up to $C$ negative points instead of 0. Appendix D compares the study population derived in Section 4.2 to those derived using other exclusion criteria. Appendix E displays the appropriateness of the normal approximation for inference on the ATE in this example. Appendix F discusses the value of the standard error used for conducting inference, and compares it to standard errors for three other procedures for conducting inference with binary outcomes. Appendix G contains the technical details of our integer programming formulation. (.pdf file)
\item[\texttt{R}-script for maximal boxes:] \texttt{maxbox.R}  provides code for producing a maximal box. (.R file)
\item[\texttt{R}-script for binary ATE:]  \texttt{ATEbinary.R} provides functions for estimation and inference on the ATE when responses are binary. One can compute the exact maximal standard errors using the \texttt{Gurobi} commercial solver or the freely available \texttt{lpSolve} package in \texttt{R}. Academic licenses for \texttt{Gurobi} are freely available, as is an \texttt{R} package. See their website for details. (.R file)
\end{description}

\bibliographystyle{apalike}
\bibliography{bibliography}

\begin{thebibliography}{}

\bibitem[Angus et~al., 2001]{ang01}
Angus, D.~C., Linde-Zwirble, W.~T., Lidicker, J., Clermont, G., Carcillo, J.,
  and Pinsky, M.~R. (2001).
\newblock Epidemiology of severe sepsis in the {United States}: analysis of
  incidence, outcome, and associated costs of care.
\newblock {\em Critical Care Medicine}, 29(7):1303--1310.

\bibitem[Angus and van~der Poll, 2013]{ang13}
Angus, D.~C. and van~der Poll, T. (2013).
\newblock {Severe sepsis and septic shock}.
\newblock {\em New England Journal of Medicine}, 369(9):840--851.

\bibitem[Aronow et~al., 2014]{aro14}
Aronow, P.~M., Green, D.~P., and Lee, D.~K. (2014).
\newblock {Sharp bounds on the variance in randomized experiments}.
\newblock {\em The Annals of Statistics}, 42(3):850--871.

\bibitem[Brun-Buisson et~al., 1996]{bru96}
Brun-Buisson, C., Doyon, F., and Carlet, J. (1996).
\newblock {Bacteremia and severe sepsis in adults: a multicenter prospective
  survey in ICUs and wards of 24 hospitals.}
\newblock {\em American Journal of Respiratory and Critical Care Medicine},
  154(3):617--624.

\bibitem[Chipman et~al., 2010]{chi10}
Chipman, H.~A., George, E.~I., and McCulloch, R.~E. (2010).
\newblock {BART: Bayesian additive regression trees}.
\newblock {\em The Annals of Applied Statistics}, 4(1):266--298.

\bibitem[Cochran, 1965]{coc65}
Cochran, W.~G. (1965).
\newblock The planning of observational studies of human populations.
\newblock {\em Journal of the Royal Statistical Society. Series A (General)},
  128(2):234--266.

\bibitem[Crump et~al., 2009]{cru09}
Crump, R.~K., Hotz, V.~J., Imbens, G.~W., and Mitnik, O.~A. (2009).
\newblock {Dealing with limited overlap in estimation of average treatment
  effects}.
\newblock {\em Biometrika}, 96(1):187--199.

\bibitem[Dehejia and Wahba, 1999]{deh99}
Dehejia, R.~H. and Wahba, S. (1999).
\newblock Causal effects in nonexperimental studies: Reevaluating the
  evaluation of training programs.
\newblock {\em Journal of the American Statistical Association},
  94(448):1053--1062.

\bibitem[Ding, 2014]{din14}
Ding, P. (2014).
\newblock A paradox from randomization-based causal inference.
\newblock {\em arXiv preprint arXiv:1402.0142}.

\bibitem[Ding and Miratrix, 2014]{din14mbias}
Ding, P. and Miratrix, L.~W. (2014).
\newblock To adjust or not to adjust? {S}ensitivity analysis of {M}-bias and
  butterfly-bias.
\newblock {\em Journal of Causal Inference}, (3):41--57.

\bibitem[Eckstein et~al., 2002]{eck02}
Eckstein, J., Hammer, P.~L., Liu, Y., Nediak, M., and Simeone, B. (2002).
\newblock The maximum box problem and its application to data analysis.
\newblock {\em Computational Optimization and Applications}, 23(3):285--298.

\bibitem[Esteban et~al., 2007]{est07}
Esteban, A., Frutos-Vivar, F., Ferguson, N.~D., Pe{\~n}uelas, O., Lorente,
  J.~{\'A}., Gordo, F., Honrubia, T., Algora, A., Bustos, A., and Garc{\'\i}a,
  G. (2007).
\newblock Sepsis incidence and outcome: Contrasting the intensive care unit
  with the hospital ward.
\newblock {\em Critical Care Medicine}, 35(5):1284--1289.

\bibitem[Gaieski et~al., 2013]{gai13}
Gaieski, D.~F., Edwards, J.~M., Kallan, M.~J., and Carr, B.~G. (2013).
\newblock Benchmarking the incidence and mortality of severe sepsis in the
  {United States}.
\newblock {\em Critical Care Medicine}, 41(5):1167--1174.

\bibitem[Greenland, 2003]{gre03}
Greenland, S. (2003).
\newblock Quantifying biases in causal models: classical confounding vs
  collider-stratification bias.
\newblock {\em Epidemiology}, 14(3):300--306.

\bibitem[Hansen, 2004]{han04}
Hansen, B.~B. (2004).
\newblock Full matching in an observational study of coaching for the {SAT}.
\newblock {\em Journal of the American Statistical Association},
  99(467):609--618.

\bibitem[Hill and Su, 2013]{hil13}
Hill, J. and Su, Y. (2013).
\newblock Assessing lack of common support in causal inference using {B}ayesian
  nonparametrics: Implications for evaluating the effect of breastfeeding on
  children's cognitive outcomes.
\newblock {\em The Annals of Applied Statistics}, 7(3):1386--1420.

\bibitem[Jones, 2011]{jon11}
Jones, A.~E. (2011).
\newblock {Point: Should Lactate Clearance Be Substituted for Central Venous
  Oxygen Saturation as Goals of Early Severe Sepsis and Septic Shock Therapy?
  Yes}.
\newblock {\em CHEST Journal}, 140(6):1406--1408.

\bibitem[King and Zeng, 2006]{kin06}
King, G. and Zeng, L. (2006).
\newblock The dangers of extreme counterfactuals.
\newblock {\em Political Analysis}, 14(2):131--159.

\bibitem[Knaus et~al., 1985]{kna85}
Knaus, W.~A., Draper, E.~A., Wagner, D.~P., and Zimmerman, J.~E. (1985).
\newblock {APACHE II}: a severity of disease classification system.
\newblock {\em Critical Care Medicine}, 13(10):818--829.

\bibitem[Lehmann, 2004]{leh04}
Lehmann, E. (2004).
\newblock {\em Elements of Large-Sample Theory}.
\newblock Springer, New York.

\bibitem[Levy et~al., 2008]{lev08}
Levy, M., Rapoport, J., Lemeshow, S., Chalfin, D.~B., Phillips, G., and Danis,
  M. (2008).
\newblock {Association between critical care physician management and patient
  mortality in the intensive care unit}.
\newblock {\em Annals of Internal Medicine}, 148(11):801--809.

\bibitem[Liu et~al., 2014]{liu14}
Liu, V., Escobar, G.~J., Greene, J.~D., Soule, J., Whippy, A., Angus, D.~C.,
  and Iwashyna, T.~J. (2014).
\newblock Hospital deaths in patients with sepsis from two independent cohorts.
\newblock {\em Journal of the American Medical Association}, 312(1):90--92.

\bibitem[Liu et~al., 2012]{liu12}
Liu, W., Brookhart, M.~A., Schneeweiss, S., Mi, X., and Setoguchi, S. (2012).
\newblock Implications of {M} bias in epidemiologic studies: a simulation
  study.
\newblock {\em {American Journal of Epidemiology}}, 176(10):938--948.

\bibitem[Margot, 2010]{mar10}
Margot, F. (2010).
\newblock Symmetry in integer linear programming.
\newblock In {\em 50 Years of Integer Programming 1958-2008}, pages 647--686.
  Springer, New York.

\bibitem[Mikkelsen et~al., 2009]{mik09}
Mikkelsen, M.~E., Miltiades, A.~N., Gaieski, D.~F., Goyal, M., Fuchs, B.~D.,
  Shah, C.~V., Bellamy, S.~L., and Christie, J.~D. (2009).
\newblock Serum lactate is associated with mortality in severe sepsis
  independent of organ failure and shock.
\newblock {\em Critical Care Medicine}, 37(5):1670--1677.

\bibitem[Neyman, 1923]{ney23}
Neyman, J. (1923).
\newblock On the application of probability theory to agricultural experiments.
  {E}ssay on principles. {S}ection 9 (in {P}olish).
\newblock {\em Roczniki Nauk Roiniczych}, X:1--51.
\newblock Reprinted in Statistical Science, 1990, 5(4):463-480.

\bibitem[Puskarich et~al., 2011]{pus11}
Puskarich, M.~A., Trzeciak, S., Shapiro, N.~I., Heffner, A.~C., Kline, J.~A.,
  and Jones, A.~E. (2011).
\newblock Outcomes of patients undergoing early sepsis resuscitation for
  cryptic shock compared with overt shock.
\newblock {\em Resuscitation}, 82(10):1289--1293.

\bibitem[{R Development Core Team}, 2014]{R14}
{R Development Core Team} (2014).
\newblock {\em R: {A} {L}anguage and {E}nvironment for {S}tatistical
  {C}omputing}.
\newblock R Foundation for Statistical Computing, Vienna, Austria.

\bibitem[Rigdon and Hudgens, 2015]{rig14}
Rigdon, J. and Hudgens, M.~G. (2015).
\newblock Randomization inference for treatment effects on a binary outcome.
\newblock {\em Statistics in Medicine}, 34(6):924 --935.

\bibitem[Rivers et~al., 2011]{riv11}
Rivers, E.~P., Elkin, R., and Cannon, C.~M. (2011).
\newblock {Counterpoint: should lactate clearance be substituted for central
  venous oxygen saturation as goals of early severe sepsis and septic shock
  therapy? No}.
\newblock {\em CHEST Journal}, 140(6):1408--1413.

\bibitem[Robins, 1988]{rob88}
Robins, J.~M. (1988).
\newblock Confidence intervals for causal parameters.
\newblock {\em Statistics in Medicine}, 7(7):773--785.

\bibitem[Rohde et~al., 2013]{roh13}
Rohde, J.~M., Odden, A.~J., Bonham, C., Kuhn, L., Malani, P.~N., Chen, L.~M.,
  Flanders, S.~A., and Iwashyna, T.~J. (2013).
\newblock The epidemiology of acute organ system dysfunction from severe sepsis
  outside of the intensive care unit.
\newblock {\em Journal of Hospital Medicine}, 8(5):243--247.

\bibitem[Rosenbaum, 1991]{ros91}
Rosenbaum, P.~R. (1991).
\newblock A characterization of optimal designs for observational studies.
\newblock {\em Journal of the Royal Statistical Society. Series B
  (Methodological)}, 53(3):597--610.

\bibitem[Rosenbaum, 2001]{ros01}
Rosenbaum, P.~R. (2001).
\newblock Effects attributable to treatment: Inference in experiments and
  observational studies with a discrete pivot.
\newblock {\em Biometrika}, 88(1):219--231.

\bibitem[Rosenbaum, 2002a]{ros02att}
Rosenbaum, P.~R. (2002a).
\newblock Attributing effects to treatment in matched observational studies.
\newblock {\em Journal of the American Statistical Association},
  97(457):183--192.

\bibitem[Rosenbaum, 2002b]{obs}
Rosenbaum, P.~R. (2002b).
\newblock {\em {Observational Studies}}.
\newblock Springer, New York.

\bibitem[Rosenbaum, 2010]{designofobs}
Rosenbaum, P.~R. (2010).
\newblock {\em Design of Observational Studies}.
\newblock Springer, New York.

\bibitem[Rosenbaum, 2012]{ros12}
Rosenbaum, P.~R. (2012).
\newblock Optimal matching of an optimally chosen subset in observational
  studies.
\newblock {\em Journal of Computational and Graphical Statistics},
  21(1):57--71.

\bibitem[Rosenbaum and Rubin, 1983]{ros83}
Rosenbaum, P.~R. and Rubin, D.~B. (1983).
\newblock The central role of the propensity score in observational studies for
  causal effects.
\newblock {\em Biometrika}, 70(1):41--55.

\bibitem[Rosenbaum and Rubin, 1984]{ros84}
Rosenbaum, P.~R. and Rubin, D.~B. (1984).
\newblock Reducing bias in observational studies using subclassification on the
  propensity score.
\newblock {\em Journal of the American Statistical Association},
  79(387):516--524.

\bibitem[Rubin, 1974]{rub74}
Rubin, D.~B. (1974).
\newblock Estimating causal effects of treatments in randomized and
  nonrandomized studies.
\newblock {\em Journal of Educational Psychology}, 66(5):688--701.

\bibitem[Rubin, 2009]{rub09}
Rubin, D.~B. (2009).
\newblock Should observational studies be designed to allow lack of balance in
  covariate distributions across treatment groups?
\newblock {\em Statistics in Medicine}, 28(9):1420--1423.

\bibitem[Rubin and Waterman, 2006]{rub06}
Rubin, D.~B. and Waterman, R.~P. (2006).
\newblock Estimating the causal effects of marketing interventions using
  propensity score methodology.
\newblock {\em Statistical Science}, 21(2):206--222.

\bibitem[Schrijver, 2003]{sch03}
Schrijver, A. (2003).
\newblock {\em Combinatorial Optimization: {P}olyhedra and {E}fficiency}.
\newblock Springer, New York.

\bibitem[Stuart, 2010]{stu10}
Stuart, E.~A. (2010).
\newblock Matching methods for causal inference: A review and a look forward.
\newblock {\em Statistical Science}, 25(1):1--21.

\bibitem[Stuart and Green, 2008]{stu08}
Stuart, E.~A. and Green, K.~M. (2008).
\newblock Using full matching to estimate causal effects in nonexperimental
  studies: examining the relationship between adolescent marijuana use and
  adult outcomes.
\newblock {\em Developmental Psychology}, 44(2):395--406.

\bibitem[Sundararajan et~al., 2005]{sun05}
Sundararajan, V., MacIsaac, C.~M., Presneill, J.~J., Cade, J.~F., and
  Visvanathan, K. (2005).
\newblock {Epidemiology of sepsis in Victoria, Australia}.
\newblock {\em Critical Care Medicine}, 33(1):71--80.

\bibitem[Traskin and Small, 2011]{tra11}
Traskin, M. and Small, D.~S. (2011).
\newblock Defining the study population for an observational study to ensure
  sufficient overlap: a tree approach.
\newblock {\em Statistics in Biosciences}, 3(1):94--118.

\bibitem[Whittaker et~al., 2015]{whi14}
Whittaker, S.~A., Fuchs, B.~D., Gaieski, D.~F., Christie, J.~D., Goyal, M.,
  Meyer, N.~J., Kean, C., Small, D.~S., Bellamy, S.~L., and Mikkelsen, M.~E.
  (2015).
\newblock Epidemiology and outcomes in patients with severe sepsis admitted to
  the hospital wards.
\newblock {\em Journal of Critical Care}, 30(1):78--84.

\bibitem[Zubizarreta, 2012]{zub12}
Zubizarreta, J.~R. (2012).
\newblock Using mixed integer programming for matching in an observational
  study of kidney failure after surgery.
\newblock {\em Journal of the American Statistical Association},
  107(500):1360--1371.

\bibitem[Zubizarreta et~al., 2014]{zub14}
Zubizarreta, J.~R., Paredes, R.~D., and Rosenbaum, P.~R. (2014).
\newblock Matching for balance, pairing for heterogeneity in an observational
  study of the effectiveness of for-profit and not-for-profit high schools in
  {Chile}.
\newblock {\em The Annals of Applied Statistics}, 8(1):204--231.

\end{thebibliography}

\end{document}